\newtheorem{theorem}{Theorem}
\newtheorem{corollary}{Corollary}
\newtheorem{definition}{Definition}
\newtheorem{remark}{Remark}
\newenvironment{proof}[1][Proof]{\noindent\textbf{#1.} }{\ \rule{0.5em}{0.5em}}
\newcommand{\jinf}[0]{J^{\infty}}
\newcommand{\refeqn}[1]{(\ref{#1})}
\begin{document}

\title{Solvable structures for  evolution PDEs admitting differential constraints }

\author{Francesco De Vecchi\thanks{Dipartimento di Matematica, Universit\`a degli Studi di Milano, via Saldini 50, 20133 Milano (Italy); {\tt francesco.devecchi@unimi.it}},
Paola Morando\thanks{DISAA, Universit\`a degli Studi di Milano, via Celoria 2, 20133 Milano (Italy); {\tt paola.morando@unimi.it}} }

\maketitle

Mathematics Subject Classification: 58J70, 35B06

Keywords: Differential constraints, Partial differential equations, Solvable structures

\begin{abstract}
Solvable structures are exploited in order to find families of explicit solutions  to evolution PDEs admitting suitable differential constraints.
The effectiveness of the method is verified on several explicit examples.
\end{abstract}

\section{Introduction}
In last decades a great interest has been devoted to symmetry
reduction methods for both ordinary and partial differential
equations and in recent times several Authors provided  different
kinds of generalizations of the classical results of Lie and Cartan.
This led to the development of new techniques which have given a
significant  improvement to the subject
(see, e.g.,\cite{BlumanKumei,CatalanoMorando,Fels,GaetaMorando,Olver2,PucciSaccomandi}).\\
In particular, the geometric approach based on  jet
bundles allows the description of  an  $r$-order ordinary differential equation (ODE) as a finite dimensional submanifold $\mathcal{E}$
of  a suitable jet space $J^{r}(M,\mathbb{R}^m)$ (\cite{KrasVin,Olver1, Stormark}).
In this setting,  the knowledge of a symmetry  for the ODE leads to reduce by one the dimension of the submanifold $\mathcal{E}$ and, under suitable hypotheses,   this  can be interpreted  as a reduction of the order of the ODE. Moreover, using  the
identification of $J^{r}(M,\mathbb{R}^m)$ with a subspace of the tangent bundle
$T(J^{r-1}(M,\mathbb{R}^m))$, an $r$-order
ODE can be described as a one-dimensional distribution of  vector fields on
$J^{r-1}(M,\mathbb{R}^m)$. If a solvable $r$-dimensional algebra of symmetries for this distribution  is known, the solution to the ODE can be obtained by quadratures. Solvable structures provide an extension of this classical result, significantly  enlarging the class of
vector fields which can be used to integrate by quadratures a given ODE and, more in general, an integrable distribution of vector fields (\cite{BarcoPrince, BasarabHorwath, HartlAthorne, SherringPrince}).
This approach can be extended to first order scalar partial differential equations (PDEs) as well as  to PDEs with  one-dimensional Cauchy characteristic space,  which are naturally described  by a single vector field on a suitable finite-dimensional jet space. In this case, the knowledge of a solvable structure allows the explicit determination of the solutions to the PDE by integrating a given system of closed one-forms (\cite{Barco, Barco1, BarcoPrince1}).\\
On the other hand, when we consider a  system of $m$ evolution  PDEs in two independent variables of the form
\begin{equation}\label{evoleq}
u^i_t=f^i(t,x,u^j, u^j_x, u^j_{xx}, \ldots ),
\end{equation}
we have to  attach to \eqref{evoleq} all its differential consequences and  the evolution PDE can  be described as an infinite dimensional submanifold $\mathcal{E} \subset \jinf(\mathbb{R}^2,\mathbb{R}^m)$ such that  $\mathcal{C} \subset T\mathcal{E}$ (here $\mathcal{C}$ denotes  the Cartan distribution i.e. the formally integrable distribution on $\jinf(\mathbb{R}^2,\mathbb{R}^m)$ generated by the total derivatives).
Therefore  the knowledge of a symmetry for the PDE does not lead
 to a reduction of the  dimension of the submanifold $\mathcal{E}$
but can be exploited by  looking for a special class of  solutions which are invariant under the symmetry. This is equivalent to look for the solutions to a new overdetermined system obtained by appending the invariance condition to the original system of PDEs. An interesting generalization of this reduction method is
provided by the \emph{differential constraints}  method, consisting in appending to \eqref{evoleq} an overdetermined systems of PDEs of the form $\mathcal{L}=\{L^1(x,t,u,u_{\sigma})=0,...,L^k(x,t,u,u_{\sigma})=0\}$ such that the system $\mathcal{L}$ admits a general finite dimensional solution and  is compatible with  \eqref{evoleq}. Many reduction methods, such  as (conditional) Lie-B\"{a}cklund and non classical symmetry reductions, direct method of Clarkson and Kruskal, Galaktionov's nonlinear separation method and  others can be seen as particular instances of differential constraints method (see \cite{BlumanCole,ClarksonKruskal,Galaktionov,Ji,KamranMilsonOlver,Kruglikov,LeviWinternitz,Olver2,PucciSaccomandi1,Zhdanov}).

\medskip
\noindent In this paper we  use solvable structures in order to obtain families of solutions to  systems of evolution PDEs of the form \eqref{evoleq} admitting suitable differential constraints.
In particular we associate with a differential constraint  a finite dimensional submanifold $\mathcal{H} \subset \mathcal{E}$ such that the Cartan distribution $\mathcal{C}$ is tangent to $\mathcal{H}$. Hence solutions to the PDEs \refeqn{evoleq} satisfying the differential constraints are integral manifolds of $\mathcal{C}_{\mathcal{H}}$  (the Cartan distribution $\mathcal{C}$ restricted to  $\mathcal{H}$) and the problem of finding particular solutions to the PDE  reduces to the problem of finding integral submanifolds of the integrable distribution $\mathcal{C}_{\mathcal{H}}$ on the finite dimensional manifold $\mathcal{H}$.
In this setting solvable structures can be successfully exploited to obtain families of explicit solutions to  \refeqn{evoleq}.

\noindent The paper is organized as follows: is Section \ref{prel}  we give a geometrical description of differential constraints method for  evolution PDEs,  in Section \ref{solvablestructure} we present solvable structures method for evolution PDEs admitting differential constraints and in Section
\ref{examples} we apply previous results to
several explicit examples.

\section{Differential constraints for evolution PDEs}\label{prel}

Let  $M$ be a $2$-dimensional manifold and $(x,t)$ be a global coordinate system on $M$.
The standard coordinate system for $J^k(M,\mathbb{R}^m)$ is  $x,t,u^i,u^i_x,u^i_t,u^i_{xx},...u^i_{\sigma}$, where $\sigma$ is a multi-index with $|\sigma|\leq k$ and  $u^i_{\sigma}$ represents the derivative with respect the variables $x,t$ the number of time given by the multi-index $\sigma=(h,k)$.
A coordinate system for $\jinf(M,\mathbb{R}^m)$ is given by $(x,t,u^i,u^i_x,...,u^i_{\sigma},...)$ without any restriction on the multi-index $\sigma$.
It is well known that the Cartan distribution $\mathcal{C}$ on $\jinf(M,\mathbb{R}^m)$ is the $2$-dimensional distribution  generated by the  vector fields
\begin{eqnarray*}
D_x&:=\partial_x+u^i_x\partial_{u^i}+u^i_{xx}\partial_{u^i_x}+u^i_{xt}\partial_{u^i_{t}}+...\\
D_t&:=\partial_t+u^i_t\partial_{u^i}+u^i_{xt}\partial_{u^i_x}+u^i_{tt}\partial_{u^i_{t}}+...
\end{eqnarray*}
With any system of  evolution   PDEs of the form
\begin{equation}\label{evoleq1}
F^i:=u^i_t-f^i(t,x,u^j,u^j_x,u^j_{xx},...)=0,
\end{equation}
where $f^i \in \mathcal{C}^{\infty}(J^k(M,\mathbb{R}^m))$ do not depend on the derivatives $u^i_t,u^i_{xt},...$, it is possible to associate
 %in compatible  if all the differential consequences  of \eqref{evoleq1} are also algebraic consequences of \eqref{evoleq1}. In the following we only consider compatible systems of evolution PDEs, or equivalently
 a submanifold $\mathcal{E}$ of $\jinf(M,\mathbb{R}^m)$  such that the restriction $\mathcal{C}_{\mathcal{E}}$ of Cartan distribution to $\mathcal{E}$ satisfies $\mathcal{C}_{\mathcal{E}}\subset T\mathcal{E}$. This means that $\mathcal{E}$ is defined by the equations
\begin{equation}\label{evoleq2}
F^i=0, \quad D^l_x(D^r_t(F^i))=0 \qquad \forall l, r \in \mathbb{N}
\end{equation}
and we can consider the natural    coordinate system  $x,t,u^i,u^i_r $ on $\mathcal{E}$ (here $u^i_r$ denotes  the derivative of the function $u^i$  with respect to $x$ $r$ times).\\
One of the most useful methods for determining particular explicit solutions to a system of evolution PDEs of the form \eqref{evoleq1} is to reduce it to a system of ODEs. This can be done by enlarging the original system of PDEs appending  compatible additional equations (called differential constraints or side conditions, see \cite{Ji,KamranMilsonOlver,Kruglikov,Olver2,OlverRosenau}).  If we look at the system \eqref{evoleq1} as a submanifold $\mathcal{E}\subset \jinf(M, \mathbb{R}^m)$
the differential constraints method is equivalent to find a suitable finite-dimensional submanifold $\mathcal{H}$ of $\mathcal{E}$.
The particular form of  equation \eqref{evoleq1} and
the explicit expression of the generators $ \bar{D}_x, \bar{D}_t$ of  $\mathcal{C}_{\mathcal{E}}$
\begin{eqnarray*}
\bar{D}_x&=&\partial_x+u^i_x\partial_{u^i}+u^i_{xx}\partial_{u^i_x}+u^i_{xxx}\partial_{u^i_{xx}}+...\\
\bar{D}_t&=&\partial_t+f^i\partial_{u^i}+	\bar{D}_x(f^i) \partial_{u^i_x}+ \bar{D}^2_x(f^i) \partial_{u^i_{xx}}+...
\end{eqnarray*}
suggest to consider differential constraints
$L^i(x,t,u^j,u^j_{x}, u^j_{xx}, \ldots , u^j_{n} )=0$  of the form
\begin{equation}\label{evoleq3}
L^i=u^i_{n_i}-g^i(t,x,u^j,u^j_{x},...),
\end{equation}
where
$\partial_{u^i_{n}}(g^j)=0$ for any $n \geq n_i$ and to look for a submanifold $\mathcal{H}\subset \mathcal{E}$ defined by
\begin{equation}\label{eq_constraints}
L^i=0, \qquad \bar{D}^r_x(L^i)=0.
\end{equation}
In this way we have that $\bar{D}_x \in T\mathcal{H}$,  but usually $\bar{D}_t \not \in T\mathcal{H}$.\\
This corresponds to the fact that, choosing arbitrary functions $g^i$, the system given by the evolution equations \refeqn{evoleq1} and  the differential constraints \refeqn{evoleq3} is not compatible and so the set of solutions is empty.
Therefore $\bar{D}_t  \in T\mathcal{H}$ expresses the compatibility condition between the  evolution equations and the differential constraints associated with $\mathcal{H}$.
\begin{definition}
Let $\mathcal{E}$ be a submanifold of $\jinf(M,\mathbb{R}^m)$ defined by  equations \eqref{evoleq2}. A finite-dimensional submanifold $\mathcal{H}\subset \mathcal{E}$ defined by \eqref{eq_constraints} is a \emph{constraint submanifold} for $\mathcal{E}$ if
\begin{equation}\label{equation_invariant}
\bar{D}_t(L^i)|_{\mathcal{H}}=0,
\end{equation}
where the evaluation on $\mathcal{H}$ consists in replacing the expression of $u^i_{k_i}$, with $k_i \geq n_i$, in terms of $x,t,u^i,u^i_{h_i}$, with $h_i < n_i$, using equations \eqref{eq_constraints}.
\end{definition}
Since $\bar{D}_t$ and $\bar{D}_x$ commute on $\mathcal{E}$,  relations \refeqn{equation_invariant} hold if and only if
\begin{equation}\label{equation_ideal}
\bar{D}_t(\bar{D}^r_x(L^i))|_{ \mathcal{H}}=0
\end{equation}
and $\mathcal{H}\subset \mathcal{E}$ is a constraint submanifold for $\mathcal{E}$ if and only if   $\bar{D}_t \in T\mathcal{H}$. \\
\begin{remark}\label{rem_solutions}
If $\mathcal{H}\subset \mathcal{E}$ is a constraint submanifold for $\mathcal{E}$, the restriction $\mathcal{C}_{\mathcal{H}}$ of the Cartan distribution $\mathcal{C}$ to $\mathcal{H}$ is a completely integrable distribution and any maximal integral submanifold of $\mathcal{C}_{\mathcal{H}}$ corresponds to a common solution to \eqref{evoleq2} and \eqref{eq_constraints}.
\end{remark}
Unfortunately equations \refeqn{equation_invariant} are usually  non-linear PDEs for the functions $g^i$ and solving them is as difficult as solving  the initial PDE. In this paper we do not address this general problem but we exploit the knowledge of some particular constraint submanifolds and of suitable solvable structures for the restricted Cartan distribution in order to find families of explicit solutions.
%We remark  that, even if a large class of special solutions to equations \refeqn{equation_invariant} is given by Lie and Lie-Backlund symmetries of the PDE,  our approach does not reduce to this case, requiring the knowledge of general finite-dimensional  submanifolds non necessarily arising from symmetries (see Example \ref{mod_heat}).

\section{Solvable structures and integrability}\label{solvablestructure}

In this section  we  recall some basic definitions and facts about  solvable structures in our framework.   The reader is referred to \cite{BarcoPrince,BasarabHorwath,CatalanoMorando,HartlAthorne,SherringPrince}  for a complete and general discussion of the subject. These results  will be used il the next Section to compute  families of explicit solutions to evolution PDEs for which a finite-dimensional constraint submanifold is known.\\
It is well known that, given a $k$-dimensional involutive distribution $K$ on an $n$-dimensional manifold $N$, the knowledge of a  solvable $(n-k)$-dimensional algebra $\mathcal{G}$ of nontrivial symmetries for $K$
 guarantees that maximal integral submanifolds for $K$ can be found by quadratures. The notion of solvable structure provides a generalization of this classical integrability result, avoiding the use of rectification of vector fields  and allowing solutions to be represented in the original coordinates of the problem.
%In the following we denote by $\langle Z_1,Z_2,\ldots,Z_r \rangle$ the distribution generated by a set of vector fields $Z_1,Z_2,\ldots,Z_r$ on $N$.

\begin{definition}
\label{Def_solv} Let $\mathcal{H}$ be an  $r$-dimensional constraint submanifold for a system $\mathcal{E}$ of evolution PDEs  and  $\mathcal{C}_{\mathcal{H}}= \langle \tilde {D}_x, \tilde{D}_t \rangle$ be the Cartan distribution restricted to $\mathcal{H}$.
The vector fields $\{X_1,X_2,\ldots,X_{r-2}\}$ are a \emph{solvable structure} for $\mathcal{C}_{\mathcal{H}}$ if and only if, $\forall h\leq r-2$, the vector field  $X_h$ is a nontrivial symmetry of $\mathcal{C}_{\mathcal{H}} \oplus \langle X_1,\ldots,X_{h-1} \rangle$.
\end{definition}

\begin{theorem}\label{TheoSolvStruct}
Let $\mathcal{H}$ be an  $r$-dimensional orientable constraint  submanifold of a system $\mathcal{E}$ of evolution PDEs   and $\Omega$ be a volume form on $\mathcal{H}$. If   $\{X_1,X_2,\ldots,X_{r-2}\}$  is a solvable structure for   $\mathcal{C}_{\mathcal{H}}$ such that $\mathcal{C}_{\mathcal{H}} \oplus \langle X_1,\ldots,X_{r-2} \rangle =T\mathcal{H}$,   then the one-forms
\begin{equation}
\displaystyle{\Omega_i:=\frac {X_1 \lrcorner\ldots \lrcorner\hat{X}_i\lrcorner\ldots \lrcorner X_{r-2}\lrcorner \tilde {D}_x\lrcorner \tilde {D}_t \lrcorner \Omega}{X_1 \lrcorner\ldots \lrcorner X_{r-2}\lrcorner \tilde {D}_x\lrcorner \tilde {D}_t \lrcorner \Omega}}, \;\; i=1,\ldots r-2
\end{equation}
satisfy
\begin{equation}
\begin{array}{l}
d\Omega_{r-2}=0\\
d\Omega_i=0 \qquad \mathrm{mod}(\Omega_{i+1}, \ldots, \Omega_{r-2})\\
\end{array}
\end{equation}
and it is possible to explicitly compute $r-2$ first integrals for the $2$-dimensional distribution $\mathcal{C}_{\mathcal{H}}$ on $\mathcal{H}$.
\end{theorem}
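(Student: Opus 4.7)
The plan is to exhibit the one-forms $\Omega_i$ as (up to sign) the dual coframe to the $X_i$ in the adapted frame $\{X_1,\ldots,X_{r-2},\tilde D_x,\tilde D_t\}$ of $T\mathcal{H}$, whose linear independence is guaranteed by the transversality assumption $\mathcal{C}_{\mathcal{H}}\oplus\langle X_1,\ldots,X_{r-2}\rangle=T\mathcal{H}$. Writing $\Phi$ for the scalar denominator $X_1\lrcorner\cdots\lrcorner X_{r-2}\lrcorner\tilde D_x\lrcorner\tilde D_t\lrcorner\Omega$, which is nowhere vanishing by the same hypothesis, a direct contraction using antisymmetry of $\Omega$ yields
\begin{equation*}
\Omega_i(X_j)=\pm\,\delta_{ij},\qquad \Omega_i(\tilde D_x)=\Omega_i(\tilde D_t)=0,
\end{equation*}
with a fixed sign that is immaterial for what follows. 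Two features of this identity will drive everything: the values of $\Omega_i$ on the frame are constants, and $\Omega_i$ annihilates the intermediate distribution $\mathcal{D}_{i-1}:=\mathcal{C}_{\mathcal{H}}\oplus\langle X_1,\ldots,X_{i-1}\rangle$.

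Next I would prove, by induction on $h$, that each distribution $\mathcal{D}_h:=\mathcal{C}_{\mathcal{H}}\oplus\langle X_1,\ldots,X_h\rangle$ is involutive. The base case $h=0$ is the integrability of $\mathcal{C}_{\mathcal{H}}$ noted in Remark 1. For the inductive step, the solvable-structure hypothesis gives $[X_h,Y]\in\mathcal{D}_{h-1}\subset\mathcal{D}_h$ for every $Y\in\mathcal{D}_{h-1}$, while brackets internal to $\mathcal{D}_{h-1}$ stay there by induction; together these exhaust the generating pairs of $\mathcal{D}_h$. The central calculation is then to evaluate Cartan's identity
\begin{equation*}
d\Omega_i(Y,Z)=Y(\Omega_i(Z))-Z(\Omega_i(Y))-\Omega_i([Y,Z])
\end{equation*}
on pairs $Y,Z$ drawn from the frame of $\mathcal{D}_i$. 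The first two terms vanish because $\Omega_i$ is constant on the frame. For the bracket term one uses $[\tilde D_x,\tilde D_t]=0$ on $\mathcal{E}$, together with $[X_k,\tilde D_\alpha]\in\mathcal{D}_{k-1}$ and $[X_j,X_k]\in\mathcal{D}_{\max(j,k)-1}$, which follow from the symmetry conditions; for indices $j,k\leq i$ every such bracket lies in $\mathcal{D}_{i-1}\subset\ker\Omega_i$. Hence $d\Omega_i$ vanishes on $\mathcal{D}_i$, which is precisely $d\Omega_i\equiv 0\bmod(\Omega_{i+1},\ldots,\Omega_{r-2})$, and specialises to $d\Omega_{r-2}=0$.

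The $r-2$ first integrals are then produced by reverse induction: Poincar\'e's lemma gives $\Omega_{r-2}=dI_{r-2}$ by a single quadrature; assuming $I_{r-2},\ldots,I_{i+1}$ have been found, the congruence $d\Omega_i\equiv 0\bmod(dI_{i+1},\ldots,dI_{r-2})$ shows that the restriction of $\Omega_i$ to any joint level set $\{I_{i+1}=c_{i+1},\ldots,I_{r-2}=c_{r-2}\}$ is closed and hence locally exact, yielding $I_i$ depending on the parameters $c_{i+1},\ldots,c_{r-2}$. Since $\Omega_i$ annihilates $\tilde D_x,\tilde D_t$, so does $dI_i$ modulo the preceding $dI_j$'s, and each $I_i$ is a first integral of $\mathcal{C}_{\mathcal{H}}$. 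The main obstacle I anticipate is the upgrade from the weak Frobenius conclusion $d\Omega_i\in(\Omega_i,\Omega_{i+1},\ldots,\Omega_{r-2})$, which follows from the involutivity of $\mathcal{D}_i$ alone, to the stronger statement that the $\Omega_i$-component is actually absent. This sharpening is exactly what the normalisation as a ratio of contracted volume forms achieves: it forces $\Omega_i$ to be constant on the frame and thereby kills the $Y(\Omega_i(Z))-Z(\Omega_i(Y))$ terms in Cartan's formula, leaving only the bracket contribution, which is in turn controlled by the solvable tower.
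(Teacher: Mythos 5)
Your proof is correct. The paper itself does not supply a proof of Theorem \ref{TheoSolvStruct} but defers to the cited references \cite{BarcoPrince,BasarabHorwath,HartlAthorne,SherringPrince}, and your argument --- identifying the $\Omega_i$ as the (normalised) dual coframe so that they are constant on the adapted frame, establishing involutivity of the nested distributions $\mathcal{D}_h$ from the solvable tower, killing the bracket term in Cartan's formula, and descending through the level sets by reverse induction --- is exactly the standard proof given in those sources.
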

The interested reader is referred to the original papers \cite{BarcoPrince,BasarabHorwath,HartlAthorne,SherringPrince} for a proof of this theorem.
\begin{corollary}
Let $\mathcal{H}$ be an  $r$-dimensional orientable constraint  submanifold of a system $\mathcal{E}$ of evolution PDEs. If   $\{X_1,X_2,\ldots,X_{r-2}\}$  is a solvable structure for   $\mathcal{C}_{\mathcal{H}}$ such that $\mathcal{C}_{\mathcal{H}} \oplus \langle X_1,\ldots,X_{r-2} \rangle =T\mathcal{H}$, it is possible to explicitly compute a family of solutions to  $\mathcal{E}$ depending on $r-2$ parameters.
\end{corollary}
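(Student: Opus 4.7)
The plan is to deduce the corollary from Theorem 1 together with Remark 1, by observing that the $r-2$ first integrals of $\mathcal{C}_{\mathcal{H}}$ produced by the solvable structure cut out its maximal integral submanifolds as common level sets, and that every such leaf, by Remark 1, is a common solution of the evolution system $\mathcal{E}$ and the differential constraints defining $\mathcal{H}$.

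More concretely, I would first apply Theorem 1 to obtain the one-forms $\Omega_1,\ldots,\Omega_{r-2}$ on $\mathcal{H}$ and use the triangular closedness conditions $d\Omega_{r-2}=0$ and $d\Omega_i\equiv 0\pmod{\Omega_{i+1},\ldots,\Omega_{r-2}}$ to produce first integrals by successive quadratures: integrate the closed form $\Omega_{r-2}$ to obtain a first integral $I_{r-2}$; on each level set $\{I_{r-2}=c_{r-2}\}$ the form $\Omega_{r-3}$ becomes closed, so a further quadrature yields $I_{r-3}$; iterating, one obtains $r-2$ functionally independent first integrals $I_1,\ldots,I_{r-2}$ of $\mathcal{C}_{\mathcal{H}}$, each expressed explicitly in the original coordinates of $\mathcal{H}$.

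Since $\mathcal{C}_{\mathcal{H}}$ is $2$-dimensional, $\dim\mathcal{H}=r$, and the $I_j$ are functionally independent first integrals, the common level sets $\{I_1=c_1,\ldots,I_{r-2}=c_{r-2}\}$ are precisely the $2$-dimensional maximal integral submanifolds of $\mathcal{C}_{\mathcal{H}}$, parametrized by the $r-2$ constants $c_1,\ldots,c_{r-2}$. By Remark 1, each such leaf corresponds to a solution $u^i(x,t)$ of the original system $\mathcal{E}$ that also satisfies the differential constraints defining $\mathcal{H}$; projecting the leaf onto the $(x,t,u^i)$ space (which is a graph, since $\tilde{D}_x,\tilde{D}_t$ project to the independent vector fields $\partial_x,\partial_t$ up to Cartan contact terms) yields this solution explicitly as the unique solution of the implicit system $I_j=c_j$.

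The only genuinely nontrivial content is already contained in Theorem 1 and Remark 1; the sole step to check in the corollary is therefore the parameter count, namely that the family of leaves is effectively $(r-2)$-dimensional. This follows from the functional independence of the $I_j$, which in turn is a direct consequence of the transversality hypothesis $\mathcal{C}_{\mathcal{H}}\oplus\langle X_1,\ldots,X_{r-2}\rangle=T\mathcal{H}$: the $(r-2)$-form $X_1\lrcorner\cdots\lrcorner X_{r-2}\lrcorner\tilde{D}_x\lrcorner\tilde{D}_t\lrcorner\Omega$ that appears in the denominators in Theorem 1 is nowhere zero, so the $\Omega_i$ are pointwise linearly independent and annihilate $\mathcal{C}_{\mathcal{H}}$, forcing the associated first integrals to be independent. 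No further computation is required.
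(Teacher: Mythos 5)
Your proposal is correct and follows exactly the paper's route: the paper proves the corollary in one line as an immediate consequence of Theorem \ref{TheoSolvStruct} and Remark \ref{rem_solutions}, and your argument is simply a fleshed-out version of that same deduction (first integrals by successive quadratures, leaves as common level sets, each leaf giving a constrained solution, with the parameter count coming from the transversality hypothesis). No discrepancy to report.
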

\begin{proof}
The proof is an immediate consequence of  Theorem \ref{TheoSolvStruct} and Remark \ref{rem_solutions}.
\end{proof}
\begin{remark}\label{rem_abeliano}
Theorem \ref{TheoSolvStruct} can be specialized to the particular  case of a  distribution $\mathcal{C}_{\mathcal{H}} $ admitting a nontrivial $(r-2)$-dimensional Abelian symmetry algebra $\mathcal{G}= \langle X_1,\ldots,X_{r-2} \rangle$. In this case the function
$$
M=\frac {1}{X_1\lrcorner X_2\lrcorner \ldots X_{r-2}\lrcorner  \tilde {D}_x \lrcorner \tilde {D}_t \lrcorner \Omega}
$$
provides an integrating factor for all the one-forms
$$\beta_i:= X_1 \lrcorner\ldots \lrcorner\hat{X}_i\lrcorner\ldots \lrcorner X_{r-2}\lrcorner \tilde {D}_x\lrcorner \tilde {D}_t \lrcorner \Omega
$$
so that each form $\Omega_i=M \beta_i $ can be separately integrated  by quadratures.
\end{remark}

\section{Examples}\label{examples}
In this Section we apply previous results to several  examples of evolution PDEs in order to show the effectiveness of the combined use of differential constraints and solvable structures  to  compute families of explicit solutions.

\subsection {Burgers' equation}
Consider the Burgers' equation
\begin{equation}\label{burgers1}
u_{t}=u_{xx}+u_{x}^2
\end{equation}
and  the distribution $\mathcal{C}_{\mathcal{E}}$
generated by
$$
\begin{array}{l}
\displaystyle { \bar D_x= \partial_x + u_x \partial_u+ u_{xx}\partial_{u_x}+ u_{xxx} \partial_{u_{xx}} + \ldots } \\
\displaystyle { \bar D_t= \partial_t +(u_{xx}+u_{x}^2)\partial_u +
\bar D_x(u_{xx}+u_{x}^2) \partial_{u_x} + \bar D^{(2)}_x(u_{xx}+u_{x}^2) \partial_{u_{xx}}+
\ldots}
\end{array}
$$
If we  consider the submanifold $\mathcal{H}$ of $\mathcal{E}$ given by
$$
\mathcal{H}:=\{ g=u_{xxx}+2u_xu_{xx}=0, \; \;  \bar D^{(k)}_x g=0 ; \; \;  k
\in \mathbb{N} \},
$$
it is easy to prove that $\mathcal{H}$ is a constraint submanifold for \eqref{burgers1}, being
$$
\bar D_t(g)|_{\mathcal{H}}= \bar D_x^{(3)}(u_{xx}+u_x^2)+2u_x \bar D^{(2)}_x(u_{xx}+u_x^2)+2u_{xx} \bar D_x(u_{xx}+u_x^2)|_{\mathcal{H}}=0.
$$
Therefore we can consider the restrictions $\tilde D_x$ and $\tilde D_t$ of the vector fields $\bar D_x$ and $\bar D_t$ to $\mathcal{H}$. In particular, choosing $(x,t,u,u_x,u_{xx})$ as  coordinates on  $\mathcal{H}$, the restricted vector fields are
$$
\begin{array}{l}
\displaystyle { \tilde D_x =  \partial_x + u_x \partial_u+ u_{xx}\partial_{u_x} -2u_x u_{xx} \partial_{u_{xx}}}\\
\displaystyle { \tilde D_t= \partial_t +(u_{xx}+u_{x}^2)\partial_u}.
\end{array}
$$
Since $\langle \tilde  D_x, \tilde D_t \rangle$ is an involutive
distribution on $\mathcal{H}$, Theorem  \ref{TheoSolvStruct}
ensures that the non trivial symmetry algebra generated by
$$
\begin{array}{lll}
X_1=\partial_x,   & X_2=\partial_u,  & X_3=2t\partial_t + x
\partial_x -u_x \partial_{u_x}-2u_{xx}
\partial_{u_{xx}}
\end{array}
$$
can be used to compute  solutions to (\ref{burgers1}). In
fact,  if we take the volume form $\Omega=  dx \wedge dt\
\wedge du \wedge du_x \wedge du_{xx}$ on $\mathcal{H}$,  the commutation relations
$$
[X_1,X_2]=0, \quad [X_1, X_3]=X_1, \quad [X_2, X_3]=0,
$$
 ensure that the function
$$
M=\frac {1}{X_1 \lrcorner X_2 \lrcorner X_3 \lrcorner  \tilde D_x \lrcorner \tilde D_t \lrcorner\Omega }=\frac {1}{2u_{xx}(u_{xx}+u_x^2)}
$$
provides an integrating factor for the one-forms
$$
\begin{array}{l}
\beta_2= X_1 \lrcorner X_3 \lrcorner \tilde D_x \lrcorner \tilde  D_t \lrcorner \Omega \\
\beta_3= X_1 \lrcorner X_2 \lrcorner \tilde D_x \lrcorner \tilde D_t \lrcorner \Omega,  \\
\end{array}
$$
while the one-form $\Omega_1= M \beta_1= M X_2 \lrcorner X_3 \lrcorner \tilde  D_x \lrcorner \tilde  D_t \lrcorner \Omega$ is closed modulo $\Omega_3=M\beta_3$.

\noindent In order to compute explicit solutions we rewrite  $\Omega_3$ as
$$
\begin{array}{l}
\displaystyle{\Omega_3=\frac {1}{2(u_{xx}+u_x^2)}du_{xx}+\frac {u_x}{(u_{xx}+u_x^2)} du_x}=d\left( \frac 12 \ln(u_{xx}+u_x^2)\right)=dF_3.\\
\end{array}
$$
On the level manifolds $F_3= \frac 12 \ln(u_{xx}+u_x^2)=c_3$ we have
\begin{equation}\label{burgers_c3}
u_{xx}=k^2_3-u_x^2
\end{equation}
with $k_3^2=e^{2c_3}$. Moreover, using (\ref{burgers_c3}), we find

$$
\begin{array}{l}
\displaystyle{\Omega_2=M\beta_2=
du-k^2_3dt- \frac {u_x}{k_3-u_x^2} du_x= d\left( u-k^2_3t+\frac 12 \ln (k^2_3-u_x^2)\right)=dF_2 }
\end{array}
$$
%\Omega_2=\left[ du-( u_{xx}+u_x^2)dt  -\left( \frac {u_x}{u_{xx}+u_x^2}+2tu_x\right)du_{x}+ \left( \frac {u_x^2}{2u_{xx}(u_{xx}+u_x^2)}-t\right)du_{xx}\right]_{F_3=c_3}}\\
and, on the   level manifolds $F_2=c_2$, we have
\begin{equation}\label{burgers_c2}
u=k^2_3t-\frac 12\ln(k^2_3-u_x^2)+c_2.
\end{equation}
Finally, considering the restriction of the one-form $\Omega_1=M\beta_1$ to (\ref{burgers_c3}), we get
$$
\begin{array}{l}
\displaystyle{\Omega_1= -dx + \frac {1}{k^2_3-u_x^2} du_x= d\left( -x+\frac {1}{2k_3}\ln(\frac {k_3+u_x}{k_3-u_x})\right)=dF_1}\\
\end{array}
$$
%\left[ -dx-\frac {xu_{xx}+u_x}{2u_{xx}(u_{xx}+u_x^2)} du_{xx} +\frac {1-xu_x}{u_{xx}+u_x^2} du_x\right]_{F_3=c_3}}
and, on the level manifolds $F_1=c_1$, we find
$$
\displaystyle{u_x=k_3\left(\frac {e^{2k_3(x+c_1)}-1}{e^{2k_3(x+c_1)}+1}\right)}.
$$
Hence using  (\ref{burgers_c2}) we obtain the explicit solution to (\ref{burgers1}) in the form
$$
\displaystyle{u=k_3^2t-\ln\left( \frac {2k_3e^{k_3(x+c_1)}}{e^{2k_3(x+c_1)}+1}\right) +c_2.}
$$

\subsection{ Heat equation}

Let  consider  the heat  equation
\begin{equation}\label{heat1}
u_{t}=u_{xx}
\end{equation}
and  the corresponding distribution
generated by
$$
\begin{array}{l}
\displaystyle { \bar D_x= \partial_x + u_x \partial_u+ u_{xx}\partial_{u_x}+ u_{xxx} \partial_{u_{xx}} + \ldots } \\
\displaystyle { \bar D_t= \partial_t +u_{xx}\partial_u +
\bar D_x(u_{xx}) \partial_{u_x} +\bar D^{(2)}_x(u_{xx}) \partial_{u_{xx}}+
\ldots}
\end{array}
$$
The knowledge of a recursion operator for the heat equation provides  an infinite family of Lie-B\"{a}cklund symmetries of the form
$$
\begin{array}{l}
X_1=u\partial_u+ u_x\partial_{u_x}+ u_{xx}\partial_{u_{xx}}\ldots ,\\
X_2=u_x\partial_u+ u_{xx}\partial_{u_x}+ u_{xxx}\partial_{u_{xx}}\ldots ,\\
X_3=u_{xx}\partial_u+  u_{xxx}\partial_{u_x}+ u_{xxxx}\partial_{u_{xx}}\ldots ,\\
\ldots\\
X_n= u_{n-1}\partial_u+ u_n \partial_{u_x}+  u_{n+1} \partial_{u_{xx}}+ \ldots \\
\ldots\\
\end{array}
$$
In order to reduce to a finite-dimensional manifold,  we  consider the submanifolds $\mathcal{H}_n$ of $\mathcal{E}$ given by
$$
\mathcal{H}_n:=\{ g=u_{n}=0, \; \;  \bar D^{(k)}_x g=0 ; \; \;  k
\in \mathbb{N} \}.
$$
It is easy to prove that, $\forall n \in \mathbb{N}$, $\mathcal{H}_n$ is a constraint  submanifold (corresponding to the Lie-B\"{a}cklund symmetry $X_n$) for \eqref{heat1} so that we can consider the restrictions $\tilde D_x$ and $\tilde D_t$ of the vector fields $\bar D_x$ and $\bar D_t$ to $\mathcal{H}_n$. In particular, choosing $(x,t,u,u_x,u_{xx}, \ldots , u_{n-1})$ as  coordinates on  $\mathcal{H}_n$, the restricted vector fields are
$$
\begin{array}{l}
\displaystyle { \tilde D_x =  \partial_x + u_x \partial_u+ u_{xx}\partial_{u_x} +u_{xxx}\partial_{u_{xx}}+ \ldots +u_{n-1}\partial_{u_{n-2}}}\\
\displaystyle { \tilde D_t= \partial_t +u_{xx}\partial_u+ u_{xxx}\partial_{u_x}+ u_{xxxx}\partial_{u_{xx}}+ \ldots +u_{n-1}\partial_{u_{n-3}}}.
\end{array}
$$
Since $X_1, X_2, \ldots X_{n-1}$  are symmetries of $X_n$ and generate a non trivial Abelian symmetry algebra for $\langle \tilde  D_x, \tilde D_t \rangle$ (when $u_{n-1}\not= 0$), Remark \ref{rem_abeliano}  ensures   that they can be used to compute  explicit solutions to (\ref{heat1}) for any  $n\in \mathbb{N}$.

This example suggests  a wide  range of possible applications and  developments of the proposed method. Indeed a similar  procedure can be used whenever a local recursion operator of order one is known, allowing the construction of an infinite  family of commuting  Lie-B\"{a}cklund symmetries $X_i$ (with $i \in \mathbb{N}$). In fact,  fixing an order $n$ and considering the submanifold $\mathcal{H}_n$ corresponding to the vanishing of the generator of $X_n$ and its differential consequences,  all the vector fields $X_h$ with $h<n$  are  tangent to $\mathcal{H}_n$ and provide (on a suitable submanifold of $\mathcal{H}_n$ where they are independent) an Abelian symmetry algebra  of suitable dimension for the  distribution generated by the restricted vector fields $\tilde D_x$ and $\tilde D_t$.

\subsection{Modified heat equation}\label{mod_heat}

Consider the equation
\begin{equation}\label{modheat1}
u_t=au_{xx}+(bx+c)
\end{equation}
and the  distribution generated  by $\bar D_x$ and
$$\bar D_t=\partial_t+ \bar D_x(au_{xx}+(bx+c)u)\partial_{u_x}+ \bar D^{(2)}_x(au_{xx}+(bx+c)u)\partial_{u_{xx}}+\ldots$$
If we take
$$g=u_{xxx}-\frac{3u_xu_{xx}}{u}+\frac{2u_x^3}{u^2},$$
it is easy to verify that the finite-dimensional  submanifold $\mathcal{H}$ of $\mathcal{E}$ defined by
$$\mathcal{H}=\{g=0, \; \bar D^{(k)}_x(g)=0; \;\; k\in \mathbb{N}\}$$
is a constraint submanifold for \eqref{modheat1}.
%which does not correspond to  any Lie-B\"{a}cklund symmetry of the equation.
 Using  $(t,x,u,u_x,u_{xx})$ as coordinates  on $\mathcal{H}$ and denoting  by $\tilde{D}_x,\tilde{D}_t$ the restriction of $\bar D_x,\bar D_t$ to $\mathcal{H}$, we have
\begin{eqnarray*}
\tilde{D}_x&=&\partial_x+u_x\partial_u+u_{xx}\partial_{u_x}+\left(\frac{3u_xu_{xx}}{u}-\frac{2u_x^2}{u^2}\right)\partial_{u_{xx}}\\
\tilde{D}_t&=&\partial_t+(au_{xx}+(bx+c)u)\partial_u+\\
&&\frac{3auu_xu_{xx}-2au^3_x+bu^3+bu_xu^2x+cu^2u_x}{u^2}\partial_{u_x}+\\
&&\frac{3au^2u^2_{xx}-2au^4_x+2bu^3u_x+bu^3u_{xx}x+cu^3u_{xx}}{u^3}\partial_{u_{xx}}.
\end{eqnarray*}
The three vector fields on $\mathcal{H}$
\begin{eqnarray*}
X_1&=&u\partial_u+u_x\partial_{u_x}+u_{xx}\partial_{u_{xx}}\\
X_2&=&\partial_x\\
X_3&=&\partial_t
\end{eqnarray*}
form a solvable structure for $\tilde D_x,\tilde D_t$,  being
\begin{eqnarray*}
&[X_1,X_2]=[X_1,X_3]=[X_2,X_3]=0&\\
&[X_1,\tilde D_x]=[X_2,\tilde D_x]=[X_3,\tilde D_x]=0&\\
&[X_1,\tilde D_t]=[X_3,\tilde D_t]=0&\\
&[X_2,\tilde D_t]=bX_1.&
\end{eqnarray*}
Since   $X_1,X_2,X_3,\tilde{D}_x,\tilde{D}_t$ are linearly independent on $\mathcal{H}$, we can consider the volume form $\Omega=dt \wedge dx \wedge du \wedge du_x \wedge du_{xx}$ on $\mathcal{H}$ and the function $M$ defined by
\begin{eqnarray*}
\frac 1 M &=&X_1 \lrcorner X_2 \lrcorner X_3 \lrcorner \tilde{D}_x \lrcorner \tilde{D}_t \lrcorner \Omega\\
&=&2a\frac{-u^3u^3_{xx}+3u^2u^2_xu_{xx}^2-3uu^4_xu_{xx}+u^6_x}{u^3}.
\end{eqnarray*}
In order to simplify  notations we use the new coordinate
$$v=\frac{u_{xx}}{u}-\frac{u^2_x}{u^2}$$
instead of $u_{xx}$ and we obtain
$$M=-\frac{1}{2av^3u^3}.$$
Since $X_1,X_3,\tilde{D}_x,\tilde{D}_t$ commute, $M$ is and integrating factor for $\beta_2= X_1 \lrcorner X_3 \lrcorner \tilde{D}_x \lrcorner \tilde{D}_t \lrcorner \Omega$ and $\beta_3= X_1 \lrcorner X_2 \lrcorner \tilde{D}_x \lrcorner \tilde{D}_t \lrcorner \Omega$  and
$$
\begin{array}{l}
\displaystyle{\Omega_2 = M \beta_2=
dx+\frac{u_x}{vu^2}du-\frac{1}{vu}du_x+\frac{2avu_x+bu}{2av^3u}dv}\\
\displaystyle{\Omega_3=M \beta_3= dt-\frac{1}{2av^2}dv}\\
\end{array}
$$
are closed differential forms. The integral functions of $\Omega_2,\Omega_3$ are
\begin{eqnarray*}
F_2&=&x+\frac{4avu_x-bu}{4av^2u}\\
F_3&=&t-\frac{1}{2av},
\end{eqnarray*}
and, on the level manifolds $F_2=k_2, F_3=k_3$, we get
\begin{eqnarray*}
v&=&\frac{1}{2a(k_2-t)}\\
u_x&=&\frac{u\left(-abk_2^2+2abk_2t-abt^2-k_3+x\right)}{2a\left(k_2-t\right)}.
\end{eqnarray*}
Moreover, from Theorem \ref{TheoSolvStruct} we have that
$$\Omega_1=M X_2 \lrcorner X_3 \lrcorner \tilde{D}_x \lrcorner \tilde{D}_t \lrcorner \Omega$$
is closed modulo $dF_2,dF_3$ and
integrating $\Omega_1$ we find
\begin{equation}\label{equation_heat1}
\begin{array}{lcl}
F_1&=&-\frac{1}{2}\log(t-k_2)-\log(u)+\left(3a^2b^2k_2^3t-6a^2b^2k_2^2t^2+\right.4a^2b^2k_2t^3\\
&&\\
&&-a^2b^2t^4-6abk_2^2x+6abk_2k_3t+12abk_2tx-6abk_3t^2-\\
&&\\
&&6abt^2x+12ack_2t-12act^2-6k_3x+3x^2+\\
&&\\
&&\left.3k_3^2\right)/\left(12a\left(k_2-t\right)\right).
\end{array}
\end{equation}
We can solve equation $F_1=k_1$  with respect to $u$ in order to write the explicit solution
\begin{eqnarray*}
u&=&\frac{1}{\sqrt{(t-k_2)}}\exp\left(\frac{x^2}{4a(k_2-t)}+\frac{(-k_3-abt^2+2abk_2t-abk_2^2)x}{2a\left(k_2-t\right)}+\right.\\
&&\frac{-a^2b^2t^4+4a^2b^2k_2t^3-(6a^2b^2k_2^2+6abk_3+12ac)t^2}{12a\left(k_2-t\right)}+\\
&&\left.\frac{(a^2b^2k_2^3+2abk_2k_3+4ack_2)t+k_3^2}{4a\left(k_2-t\right)}-k_1\right).
\end{eqnarray*}
%\begin{eqnarray*}
%g_3&=&-\frac{1}{2}\log(t-g_1)-\log(u)\left(+3a^2b^2g_{1}^4t-6a^2b^2g_{1}^3t^2+\right.4a^2b^2g_{1}^2t^3\\
%&&-a^2b^2g_{1}t^4-6abg_{1}^3x+6abg_{1}^2g_{2}\*t+12abg_{1}^2tx-6abg_{1}g_{2}t^2-\\
%&&6abg_{1}t^2x+12acg_{1}^2t-12acg_{1}t^2-6g_{1}g_{2}x+3g_{1}x^2+\\
%&&\left.13g_{2}^2t\right)/\left(12ag_{1}\left(g_{1}-t\right)\right).
%\end{eqnarray*}
If, for example, we chose $a=1,b=0,c=0,k_1=\ln(\sqrt{4\pi}), k_2=0,k_3=y$ we obtain
$$u=\frac{1}{\sqrt{4\pi t}}\exp\left(\frac{-(x-y)^2}{4t}\right)$$
that is the well know heat kernel.

\subsection{System of evolution equations}

Consider the following system of evolution equations

\begin{equation}\label{sist_evol}
u_{t}=u_{xx}+\frac 12 v^2 , \qquad
v_t=2v_{xx}
\end{equation}
and the corresponding  restricted Cartan distribution
generated by
$$
\begin{array}{l}
\displaystyle { \bar D_x=  \partial_x + u_x \partial_u+v_x \partial_v+ u_{xx} \partial_{u_x}+v_{xx} \partial_{v_x} + \ldots }\\
\displaystyle { \bar D_t= \partial_t+(u_{xx}+\frac 12v^2)\partial_u +2v_{xx} \partial_v+
\bar D_x(u_{xx}+\frac 12v^2) \partial_{u_x}+\bar D_x(2v_{xx}) \partial_{v_x}  +
\ldots}
\end{array}
$$
If we  consider the submanifold $\mathcal{H}$ of $\mathcal{E}$ given by
$$
\mathcal{H}:=\{ g^1=u_{xxx}+3v v_x=0, \; \; g^2=v_{xxx}=0, \; \;  \bar D^{(k)}_x g^1=0 ,  \; \; \bar D^{(k)}_x g^2=0; \; \;  k
\in \mathbb{N} \}
$$
it is easy to prove (by explicit computation) that $\mathcal{H}$ is a constraint submanifold for \eqref{sist_evol}. Hence
 we can restrict  the vector fields $\bar D_x$ and $\bar D_t$ to $\mathcal{H}$ and, choosing  coordinates $(x,t,u,v,u_x,v_x,u_{xx}, v_{xx})$ on  $\mathcal{H}$,  we find
$$
\begin{array}{l}
\displaystyle {\tilde D_x =  \partial_x + u_x \partial_u+ v_x \partial_v+
 u_{xx} \partial_{u_x}+v_{xx} \partial_{v_x}-3v v_x \partial_{u_{xx}}}\\
\displaystyle { \tilde D_t= \partial_t+(u_{xx}+\frac 12 v^2)\partial_u+ 2v_{xx} \partial_v-2vv_x \partial_{u_x}-(2v_x^2+2vv_{xx}) \partial_{u_{xx}}}.
\end{array}
$$
Since $\langle \tilde D_x, \tilde D_t \rangle$ is an involutive
distribution on $\mathcal{H}$, we can use  the solvable structure
$$
\begin{array}{lll}
X_1=\partial_t,  & X_2= \partial_x,  & X_3=\partial_u,\\
X_4= \partial_{u_x}, & X_5= \partial_{u_{xx}} , &  X_6=\partial_{v_{xx}}, \\
\end{array}
$$
in order to find explicit solutions to the system (\ref{sist_evol}).
If we take  the volume form on $\mathcal{H}$ as
$\Omega=dt \wedge dx \wedge du \wedge du_x \wedge du_{xx} \wedge dv \wedge dv_x \wedge dv_{xx}$
 we find
$$
\frac 1M= X_1 \lrcorner X_2 \lrcorner X_3 \lrcorner X_4 \lrcorner X_5 \lrcorner X_6 \lrcorner \tilde D_x \lrcorner \tilde D_t \lrcorner \Omega\\
= - 2 v_{xx}^2
$$
and we get the closed one-form
$$\Omega_6=M  X_1 \lrcorner X_2 \lrcorner X_3 \lrcorner X_4 \lrcorner X_5 \lrcorner \tilde  D_x \lrcorner \tilde D_t \lrcorner \Omega=dv_{xx}=dF_6.$$
Moreover, on the level manifolds $F_6=c_6$, the one-form
\begin{eqnarray*}
\Omega_5&=&M  X_1 \lrcorner X_2 \lrcorner X_3 \lrcorner X_4 \lrcorner X_6 \lrcorner \tilde D_x \lrcorner \tilde D_t \lrcorner \Omega\\
&=&\frac{c_{6}^2du_{xx}+(c_{6}^2v+c_{6}v_x^2)dv+(2c_{6}vv_x-v_x^3)dv_x}{c_{6}^2} \\
\end{eqnarray*}
is closed and
$$\Omega_5=dF_5=d\left(u_{xx}-\frac{v_x^4}{4c_6^2}+\frac{v^2}{2}+\frac{vv_x^2}{c_6}\right).$$
Furthermore, if we restrict to $F_5=c_5, \, F_6=c_6$, the one-form
\begin{eqnarray*}
\Omega_4&=&M  X_1 \lrcorner X_2 \lrcorner X_3 \lrcorner X_5 \lrcorner X_6 \lrcorner \tilde  D_x \lrcorner \tilde D_t \lrcorner \Omega\\
&=&\frac{-4c_6^3du_x-4c_{6}^2vv_xdv+(4c_{6}^2c_{5}-2c_{6}^2v^2+v_x^4)dv_x}{4c_{6}^3}
\end{eqnarray*}
satisfies
$$\Omega_4=dF_4=d\left( -u_x+\frac{c_5v_x}{c_6}-\frac{v^2v_x}{2c_6}+\frac{v_x^5}{20c_6^3}\right).$$
Then, on the level manifolds $F_4=c_4, F_5=c_5,F_6=c_6$ we find
\begin{eqnarray*}
\Omega_3 & = &M  X_1 \lrcorner X_2 \lrcorner X_4 \lrcorner X_5 \lrcorner X_6 \lrcorner \tilde D_x \lrcorner \tilde  D_t \lrcorner \Omega\\
&=&\left( 40c_{6}^4du+(-20c_{6}^3c_{5}+20c_{6}^2vv_x^2-5c_{6}v_x^4)dv+\right.\\
&&\left. (-40c_{6}^3c_{4}-20c_{6}^2c_{5}v_x+20c_{6}^2v^2v_x-20c_{6}vv_x^3+3v_x^5)dv_x \right)/ \left(40c_{6}^4\right)\\
\end{eqnarray*}
and
$$\Omega_3=dF_3=d\left(u+\frac{-40c_{6}^3c_{5}v-80c_{6}^3c_{4}v_x-20c_{6}^2c_{5}v_x^2+20c_{6}^2v^2v_x^2-10c_{6}vv_x^4+v_x^6}{80c_{6}^4}\right),$$
while, on the submanifolds $F_6=c_6$, we have
\begin{eqnarray*}
\Omega_2&=&M  X_1 \lrcorner X_3 \lrcorner X_4 \lrcorner X_5 \lrcorner X_6 \lrcorner \tilde D_x \lrcorner \tilde  D_t \lrcorner \Omega\\
&=&\frac{-dv_x+c_{6}dx}{c_{6}},
\end{eqnarray*}
so that
$$\Omega_2=dF_2=d(v_x-c_6 x).$$
Finally,  on the level manifolds $F_2=c_2,F_6=c_6$ we have
\begin{eqnarray*}
\Omega_1&=&M X_2 \lrcorner X_3 \lrcorner X_4 \lrcorner X_5 \lrcorner X_6 \lrcorner \tilde  D_x \lrcorner \tilde D_t \lrcorner \Omega\\
&=&\frac{-2c_{6}dt+dv+(-c_{6}x-c_{2})dx}{2c_{6}}\\
&=&dF_1=d\left( -v-2c_6\left(-\frac{x^2}{4}-\frac{c_2x}{2c_6}-t\right) \right)
\end{eqnarray*}
and from $F_1=c_1$ and  the previous equations we obtain the explicit solution
\begin{eqnarray*}
v&=&-c_1-2c_6\left(-\frac{x^2}{4}-\frac{c_2x}{2c_6}-t\right)\\
u&=&\left(-80c_{6}^6t^2x^2-20c_{6}^6tx^4-c_{6}^6x^6-160c_{6}^5c_{2}t^2x-80c_{6}^5c_{2}tx^3-6c_{6}^5c_{2}x^5+\right.\\
&&80c_{6}^5c_{1}tx^2+10c_{6}^5c_{1}x^4+80c_{6}^4c_{5}t+40c_{6}^4c_{5}x^2+80c_{6}^4c_{4}x-80c_{6}^4c_{2}^2t^2+\\
&&-80c_{6}^4c_{2}^2tx^2-10c_{6}^4c_{2}^2x^4+160c_{6}^4c_{2}c_{1}tx+40c_{6}^4c_{2}c_{1}x^3-20c_{6}^4c_{1}^2x^2+\\
&&-80c_{6}^4c_{3}+80c_{6}^3c_{5}c_{2}x-40c_{6}^3c_{5}c_{1}+80c_{6}^3c_{4}c_{2}+80c_{6}^3c_{2}^2c_{1}t+\\
&&40c_{6}^3c_{2}^2c_{1}x^2-40c_{6}^3c_{2}c_{1}^2x+20c_{6}^2c_{5}c_{2}^2+20c_{6}^2c_{2}^4t+10c_{6}^2c_{2}^4x^2+
\\
&&\left.-20c_{6}^2c_{2}^2c_{1}^2+4c_{6}\*c_{2}^5x-10c_{6}c_{2}^4c_{1}-c_{2}^6\right)/\left(80c_{6}^4\right).
\end{eqnarray*}

\end{document}